\RequirePackage[T1]{fontenc}
\documentclass[conference,onecolumn,10pt,letterpaper]{IEEEtran}
\usepackage{fontspec}
\usepackage{amsmath,amssymb}
\usepackage{unicode-math}
\usepackage{graphicx,booktabs,array,longtable}
\usepackage{enumitem}
\usepackage{xcolor}
\usepackage{tikz}
\usepackage{hyperref}
\hypersetup{hidelinks}
\usepackage{placeins,needspace}

\newcommand{\F}{\mathbb F}
\newcommand{\Z}{\mathbb Z}

\newcommand{\wt}{\operatorname{wt}}
\newcommand{\rank}{\operatorname{rank}}
\newcommand{\dmin}{d_{\min}}
\newtheorem{theorem}{Theorem}
\newtheorem{lemma}[theorem]{Lemma}
\newtheorem{corollary}[theorem]{Corollary}
\newtheorem{definition}[theorem]{Definition}

\setlist{nosep}
\title{CPM-LDPC Codes\\Attaining the Minimum-Distance Bound}
\author{\IEEEauthorblockN{Kenta Kasai}
\IEEEauthorblockA{Department of Information and Communications Engineering\\
School of Engineering, Institute of Science Tokyo}}

\hypersetup{pdfauthor={Kenta Kasai},pdftitle={CPM-LDPC Codes Attaining the Minimum-Distance Bound}}
\begin{document}
\maketitle
\begin{abstract}
We study binary quasi-cyclic LDPC codes whose parity-check matrices are full arrays of single circulant permutation matrices (CPMs), referred to here as CPM-LDPC codes. Their minimum distance is at most $(J+1)!$, where $J$ is the column weight. For every fixed pair of column and row weights $2\le J<L$, we show that this bound is attained for all sufficiently large integer lift sizes. First, we give one integer exponent matrix independent of the lift size $P$. Second, we show that independent uniform exponent choices attain the bound with probability $1-O_{J,L}(P^{-1})$. Both proofs use cycle conditions required by low-weight codewords and a lower bound on the number of terms in vectors satisfying polynomial check equations. Neither construction requires $P$ to be prime. We also give small-lift arrays attaining the bound 24 for $J=3$, $L=4,\ldots,8$, and arrays with distance at least 28 for $J=4$, $L=5,\ldots,8$, together with computational distance verification.
\end{abstract}

\section{Introduction}
We use CPM-LDPC to denote binary low-density parity-check (LDPC) codes whose parity-check matrices have a single circulant permutation matrix (CPM) in every block. These are a class of quasi-cyclic LDPC (QC-LDPC) codes and are also called QC-LDPC codes. The name CPM-LDPC makes their construction from circulant permutation blocks explicit. Each CPM is specified by one cyclic shift, so a large check matrix is described by few parameters. The cyclic structure also supports encoding and decoder implementation. Tanner et al.~\cite{tanner2004} constructed QC-LDPC codes and their convolutional counterparts from circulant matrices and studied their distance, Tanner-graph cycles, and encoding methods.

A $J\times L$ array of $P\times P$ circulant permutation matrices has column weight $J$, row weight $L$, and length $LP$. We call $P$ the lift size. This format imposes a restriction: increasing $P$ cannot raise the minimum distance above $(J+1)!$. The factorial bound originates with MacKay and Davey~\cite{mackay2001}; Smarandache and Vontobel~\cite[Corollary~9]{sv2011} derive it as a special case of permanent-based bounds that apply to more general arrays. For fixed $J,L$, a basic distance-design question is therefore whether this bound can be attained.

Exponent design often uses conditions that exclude short cycles. Fossorier~\cite{fossorier2004} gave girth conditions for arrays of circulant permutation matrices in terms of sums of exponent differences along cycles, as well as a necessary condition for attaining the factorial distance bound. Smarandache and Mitchell~\cite{sm2022} developed conditions and algorithms for constructing QC-LDPC codes with prescribed girth. Bocharova et al.~\cite{bocharova2011} combined searches based on cycle lengths with minimum-distance computations and gave examples attaining the distance bound. Attainment itself is therefore already known for selected parameters. Excluding short cycles alone, however, does not establish the absence of every codeword below the distance bound.

We ask whether, for every fixed $2\le J<L$, one integer exponent matrix independent of $P$ attains the factorial bound for all sufficiently large integers $P$. Section~\ref{sec:explicit} answers this affirmatively with an explicit construction whose exponents are powers of a fixed integer. Section~\ref{sec:random} further shows that, when the exponents are independent and uniform, the probability of falling below the bound is at most a constant times $P^{-1}$. Both results apply to composite lift sizes as well as primes.

The proofs use conditions on the exponents that are necessary for a low-weight codeword to exist. Section~\ref{sec:preliminaries} defines a graph by pairing the 1-positions of a codeword at each check and obtains congruence conditions from its cycles. Sums of exponents along cycles are familiar in CPM-LDPC analysis; here they are applied to the entire support of a codeword. If each cycle sum, viewed as an integer linear combination of the exponents, has all coefficients zero, we can represent each 1-position by a monomial and rewrite the parity checks as polynomial identities. The contributions from two vertices paired at a check become the same monomial and cancel in that check equation. Within each component, however, distinct positions in the same block column give distinct monomials, so no terms cancel and the codeword weight is preserved as the total number of terms in the components. This correspondence allows us to use a theorem of Draisma, Kahle, and Wiersig on polynomial term counts~\cite[Theorem~3.1 and Corollary~3.3]{dkw2023} to show that a codeword below the bound requires at least one cycle congruence with a nonzero integer coefficient vector. The explicit construction excludes that condition, while the random construction bounds its probability. The algebraic derivation of the term-count bound is given in the appendix.

The explicit construction uses large exponents and a large sufficient lift size, and the random bound has a large constant. Examples at small $P$ therefore complement these general guarantees. Section~\ref{sec:examples} presents, in a common format, examples attaining the bound 24 for $J=3$ and examples with distance at least 28 for $J=4$. The latter demonstrate distances unavailable with $J=3$ in this array format; they do not attain the $J=4$ bound of 120. The reported distances are verified from the completed search scope for each exponent array.

The array format matters. Mitchell et al.~\cite{mitchell2014} use a two-step construction that first enlarges a small graph and then applies a circulant lift, improving distance and girth relative to a direct circulant lift of the original graph. We retain a fixed-size array in which all $JL$ blocks are single circulant permutation matrices and study attainment within that format. The factorial bound is not asserted for arbitrary QC-LDPC codes specified only by their column and row weights. For fixed $J,L$, the guaranteed distance is constant, so the relative minimum distance $\dmin/(LP)$ tends to zero as $P$ grows.

This study grew out of the question of improving the minimum distance of pair-partition (PP) quantum codes built from CPMs~\cite{okada2026pp} by choosing their exponent arrays or lift sizes. That question led us to study attainment of the distance bound for a single parity-check matrix. Appendix~\ref{sec:quantum} discusses the additional analysis needed for an extension to quantum codes.

\section{Preliminaries}
\label{sec:preliminaries}
Attaining the distance bound requires excluding every nonzero codeword of smaller weight. To express the necessary conditions in terms of the exponent matrix, we first define the code and its positions, then construct a graph from a codeword. The cycle equations and connectivity for minimum-weight codewords provide a common starting point for the two constructions.

\subsection{Code definition}
\label{sec:code-definition}
A circulant permutation matrix shifts $P$ positions by a fixed amount. For example, with $P=5$, a shift by 2 sends positions $0,1,2,3,4$ to $2,3,4,0,1$, respectively. The exponent matrix $E$ lists these shift amounts $e_{j\ell}$. Positions are identified by their remainders modulo $P$, so adding $P$ to an exponent leaves the matrix unchanged.

Let $2\le J<L$ and $P\ge2$ be integers. For an exponent matrix $E=(e_{j\ell})$, let $H_P(E)$ be the binary parity-check matrix whose Tanner graph joins variable node $(\ell,t)$ to check node $(j,t+e_{j\ell})$. Variable and check nodes correspond to columns and rows, respectively. The indices satisfy $0\le j<J$, $0\le\ell<L$, and $t\in\Z/P\Z$, where $\Z/P\Z$ denotes positions $0,1,\ldots,P-1$, with position sums reduced modulo $P$. Every block is a single circulant permutation matrix, with no zero blocks. Define
\begin{equation}
 C_P(E)=\ker_{\F_2}H_P(E),\qquad
 \dmin(C_P(E))=\min_{0\ne c\in C_P(E)}\wt(c).
 \label{eq:code}
\end{equation}
In the binary field $\F_2$, the elements are 0 and 1, with $1+1=0$. The notation $\ker_{\F_2}H_P(E)$ denotes all binary vectors $c$ satisfying $H_P(E)c=0$: each check must contain an even number of 1s. The support is the set of 1-positions, and the weight $\wt(c)$ counts them. The minimum distance is the smallest weight of a codeword other than the all-zero word. The column weight is $J$, the row weight is $L$, and the length is $LP$. Summing the $P$ rows in any block row gives the all-one vector, so every codeword has even weight. The $J-1$ independent dependencies between these block-row sums give
\begin{equation}
 k\ge (L-J)P+J-1.
 \label{eq:dimension}
\end{equation}
Indeed, summing all check equations in one block row includes each variable exactly once. The sum of all codeword coordinates is therefore zero, so the weight is even. Also, adding the row sum of block row 0 to that of block row $j$ gives the zero vector for each $j=1,\ldots,J-1$. These row dependencies are independent: among them, only the $j$th uses rows from block row $j$. Hence $\rank H_P(E)\le JP-(J-1)$, and the rank--nullity formula $k=LP-\rank H_P(E)$ gives \eqref{eq:dimension}.

\subsection{A graph associated with a codeword}
\label{sec:codeword-graph}
We construct a graph whose vertices are exactly the 1-positions of a codeword. Each check contains an even number of 1s, so the vertices belonging to that check can be paired and joined by edges. Each edge thus represents two 1s canceling in a check equation. This construction applies to any exponent matrix $E$ and any of its codewords.

\begin{definition}[Graph associated with a codeword and pairings]
\label{def:codeword-graph}
Take a codeword $c\in C_P(E)$ with support $S$. At each check node $(j,s)$, pair the vertices of $S$ incident to that check, namely the positions $(\ell,t)\in S$ satisfying $t+e_{j\ell}\equiv s\pmod P$. Write $\Pi$ for the collection of all these pairing choices. The graph $G(c,\Pi)$ has vertex set $S$; each pair chosen at check $(j,s)$ is joined by an undirected edge of color $j$.
\end{definition}

Only variable nodes are vertices of $G(c,\Pi)$; check nodes are not retained as vertices. A pair contains two distinct vertices, so there are no loops. The same two vertices may be joined in different colors, so parallel edges are allowed. If a check contains four 1s, for example, there are three ways to pair them. Thus the graph depends on the pairing choices $\Pi$, as well as on the codeword. The following properties hold for every choice of pairings. The all-zero word gives the empty graph, with no vertices or edges.

\begin{figure}[p]
\centering
\input{figures/example34_symbols.tex}
\begin{minipage}{\textwidth}
\centering
\begin{tikzpicture}[x=1cm,y=1cm]
\node[inner sep=0,anchor=north west] at (0,0)
  {\includegraphics[width=16cm]{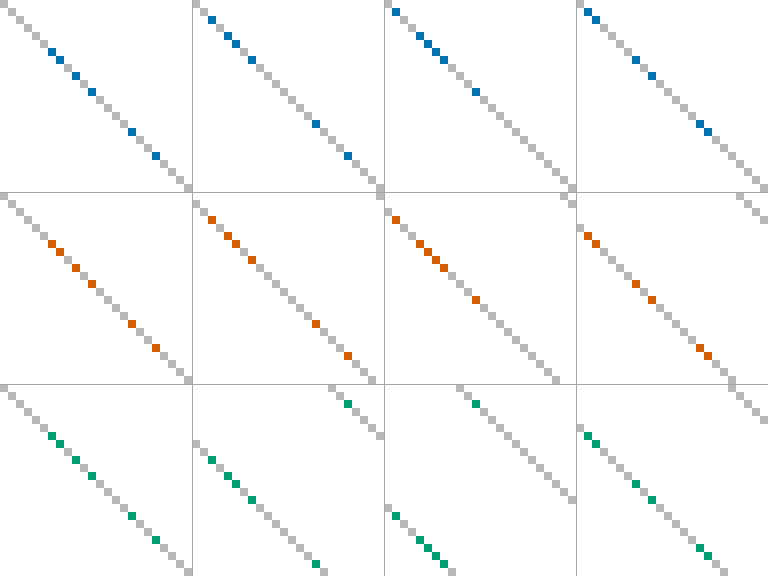}};
\foreach \cellcol in {1,...,95}
  \draw[black!15,line width=0.15pt] ({\cellcol/6},0) -- ({\cellcol/6},-12);
\foreach \cellrow in {1,...,71}
  \draw[black!15,line width=0.15pt] (0,{-\cellrow/6}) -- (16,{-\cellrow/6});
\draw[black!45,thin] (0,0) rectangle (16,-12);
\foreach \blockcol in {0,1,2,3}
  \node[font=\small] at ({(\blockcol+0.5)*4},0.3) {$\ell=\blockcol$};
\foreach \j in {0,1,2}
  \node[font=\small,anchor=east] at (-0.12,{-(\j+0.5)*4}) {$j=\j$};
\foreach \cellcol in {1,...,95}
  \draw[black!15,line width=0.15pt] ({\cellcol/6},-12.35) -- ({\cellcol/6},-12.85);
\draw[black!45,thin] (0,-12.35) rectangle (16,-12.85);
\draw[black!45,thin] (4,-12.35) -- (4,-12.85);
\draw[black!45,thin] (8,-12.35) -- (8,-12.85);
\draw[black!45,thin] (12,-12.35) -- (12,-12.85);
\begin{scope}[shift={(1.08333333,-12.6)},x=1.5mm,y=1.5mm]
\path[fill=exampleV0] (0,0) circle[radius=0.5];
\end{scope}
\begin{scope}[shift={(1.25000000,-12.6)},x=1.5mm,y=1.5mm]
\path[fill=exampleV0] (-0.5,-0.5) rectangle (0.5,0.5);
\end{scope}
\begin{scope}[shift={(1.58333333,-12.6)},x=1.5mm,y=1.5mm]
\path[fill=exampleV0] (0,0.5) -- (-0.5,-0.5) -- (0.5,-0.5) -- cycle;
\end{scope}
\begin{scope}[shift={(1.91666667,-12.6)},x=1.5mm,y=1.5mm]
\path[fill=exampleV0] (0,0.5) -- (-0.5,0) -- (0,-0.5) -- (0.5,0) -- cycle;
\end{scope}
\begin{scope}[shift={(2.75000000,-12.6)},x=1.5mm,y=1.5mm]
\path[fill=exampleV0] (0,-0.5) -- (-0.5,0.5) -- (0.5,0.5) -- cycle;
\end{scope}
\begin{scope}[shift={(3.25000000,-12.6)},x=1.5mm,y=1.5mm]
\path[fill=exampleV0] (-0.18,0.5) -- (0.18,0.5) -- (0.18,0.18) -- (0.5,0.18) -- (0.5,-0.18) -- (0.18,-0.18) -- (0.18,-0.5) -- (-0.18,-0.5) -- (-0.18,-0.18) -- (-0.5,-0.18) -- (-0.5,0.18) -- (-0.18,0.18) -- cycle;
\end{scope}
\begin{scope}[shift={(4.41666667,-12.6)},x=1.5mm,y=1.5mm]
\path[fill=exampleV1] (0,0) circle[radius=0.5];
\end{scope}
\begin{scope}[shift={(4.75000000,-12.6)},x=1.5mm,y=1.5mm]
\path[fill=exampleV1] (-0.5,-0.5) rectangle (0.5,0.5);
\end{scope}
\begin{scope}[shift={(4.91666667,-12.6)},x=1.5mm,y=1.5mm]
\path[fill=exampleV1] (0,0.5) -- (-0.5,-0.5) -- (0.5,-0.5) -- cycle;
\end{scope}
\begin{scope}[shift={(5.25000000,-12.6)},x=1.5mm,y=1.5mm]
\path[fill=exampleV1] (0,0.5) -- (-0.5,0) -- (0,-0.5) -- (0.5,0) -- cycle;
\end{scope}
\begin{scope}[shift={(6.58333333,-12.6)},x=1.5mm,y=1.5mm]
\path[fill=exampleV1] (0,-0.5) -- (-0.5,0.5) -- (0.5,0.5) -- cycle;
\end{scope}
\begin{scope}[shift={(7.25000000,-12.6)},x=1.5mm,y=1.5mm]
\path[fill=exampleV1] (-0.18,0.5) -- (0.18,0.5) -- (0.18,0.18) -- (0.5,0.18) -- (0.5,-0.18) -- (0.18,-0.18) -- (0.18,-0.5) -- (-0.18,-0.5) -- (-0.18,-0.18) -- (-0.5,-0.18) -- (-0.5,0.18) -- (-0.18,0.18) -- cycle;
\end{scope}
\begin{scope}[shift={(8.25000000,-12.6)},x=1.5mm,y=1.5mm]
\path[fill=exampleV2] (0,0) circle[radius=0.5];
\end{scope}
\begin{scope}[shift={(8.75000000,-12.6)},x=1.5mm,y=1.5mm]
\path[fill=exampleV2] (-0.5,-0.5) rectangle (0.5,0.5);
\end{scope}
\begin{scope}[shift={(8.91666667,-12.6)},x=1.5mm,y=1.5mm]
\path[fill=exampleV2] (0,0.5) -- (-0.5,-0.5) -- (0.5,-0.5) -- cycle;
\end{scope}
\begin{scope}[shift={(9.08333333,-12.6)},x=1.5mm,y=1.5mm]
\path[fill=exampleV2] (0,0.5) -- (-0.5,0) -- (0,-0.5) -- (0.5,0) -- cycle;
\end{scope}
\begin{scope}[shift={(9.25000000,-12.6)},x=1.5mm,y=1.5mm]
\path[fill=exampleV2] (0,-0.5) -- (-0.5,0.5) -- (0.5,0.5) -- cycle;
\end{scope}
\begin{scope}[shift={(9.91666667,-12.6)},x=1.5mm,y=1.5mm]
\path[fill=exampleV2] (-0.18,0.5) -- (0.18,0.5) -- (0.18,0.18) -- (0.5,0.18) -- (0.5,-0.18) -- (0.18,-0.18) -- (0.18,-0.5) -- (-0.18,-0.5) -- (-0.18,-0.18) -- (-0.5,-0.18) -- (-0.5,0.18) -- (-0.18,0.18) -- cycle;
\end{scope}
\begin{scope}[shift={(12.25000000,-12.6)},x=1.5mm,y=1.5mm]
\path[fill=exampleV3] (0,0) circle[radius=0.5];
\end{scope}
\begin{scope}[shift={(12.41666667,-12.6)},x=1.5mm,y=1.5mm]
\path[fill=exampleV3] (-0.5,-0.5) rectangle (0.5,0.5);
\end{scope}
\begin{scope}[shift={(13.25000000,-12.6)},x=1.5mm,y=1.5mm]
\path[fill=exampleV3] (0,0.5) -- (-0.5,-0.5) -- (0.5,-0.5) -- cycle;
\end{scope}
\begin{scope}[shift={(13.58333333,-12.6)},x=1.5mm,y=1.5mm]
\path[fill=exampleV3] (0,0.5) -- (-0.5,0) -- (0,-0.5) -- (0.5,0) -- cycle;
\end{scope}
\begin{scope}[shift={(14.58333333,-12.6)},x=1.5mm,y=1.5mm]
\path[fill=exampleV3] (0,-0.5) -- (-0.5,0.5) -- (0.5,0.5) -- cycle;
\end{scope}
\begin{scope}[shift={(14.75000000,-12.6)},x=1.5mm,y=1.5mm]
\path[fill=exampleV3] (-0.18,0.5) -- (0.18,0.5) -- (0.18,0.18) -- (0.5,0.18) -- (0.5,-0.18) -- (0.18,-0.18) -- (0.18,-0.5) -- (-0.18,-0.5) -- (-0.18,-0.18) -- (-0.5,-0.18) -- (-0.5,0.18) -- (-0.18,0.18) -- cycle;
\end{scope}
\node[font=\small,anchor=east] at (-0.12,-12.6) {$c^{\mathsf T}$};
\end{tikzpicture}
\par\smallskip
(a) Check matrix and codeword
\end{minipage}\par\medskip
\begin{minipage}{\textwidth}
\centering
\input{figures/example34_symbols.tex}
\definecolor{exampleJzero}{RGB}{0,114,178}
\definecolor{exampleJone}{RGB}{213,94,0}
\definecolor{exampleJtwo}{RGB}{0,158,115}
\begin{tikzpicture}[x=0.82cm,y=0.82cm,
  g0/.style={exampleJzero,line width=0.7pt},
  g1/.style={exampleJone,dashed,line width=0.85pt},
  g2/.style={exampleJtwo,densely dotted,line width=1pt,preaction={draw=white,solid,line width=2.2pt}},
  vertexlabel/.style={font=\scriptsize,inner sep=0pt,text=black}]
\coordinate (v0) at (-2.97861,2.56500);
\coordinate (v1) at (-1.02139,1.43500);
\coordinate (v2) at (2.97861,1.43500);
\coordinate (v3) at (1.02139,2.56500);
\coordinate (v4) at (2.00000,-3.13000);
\coordinate (v5) at (2.00000,-0.87000);
\coordinate (v6) at (-1.02139,-2.56500);
\coordinate (v7) at (-2.97861,-1.43500);
\coordinate (v8) at (-2.00000,3.13000);
\coordinate (v9) at (-2.00000,0.87000);
\coordinate (v10) at (2.97861,-2.56500);
\coordinate (v11) at (1.02139,-1.43500);
\coordinate (v12) at (-2.00000,-3.13000);
\coordinate (v13) at (-2.00000,-0.87000);
\coordinate (v14) at (-1.02139,2.56500);
\coordinate (v15) at (-2.97861,1.43500);
\coordinate (v16) at (2.97861,2.56500);
\coordinate (v17) at (1.02139,1.43500);
\coordinate (v18) at (-2.97861,-2.56500);
\coordinate (v19) at (-1.02139,-1.43500);
\coordinate (v20) at (2.00000,3.13000);
\coordinate (v21) at (2.00000,0.87000);
\coordinate (v22) at (2.97861,-1.43500);
\coordinate (v23) at (1.02139,-2.56500);
\draw[g0] (v12) -- (v18);
\draw[g0] (v6) -- (v19);
\draw[g0] (v7) -- (v13);
\draw[g0] (v8) -- (v14);
\draw[g0] (v0) -- (v15);
\draw[g0] (v1) -- (v9);
\draw[g0] (v16) -- (v20);
\draw[g0] (v2) -- (v21);
\draw[g0] (v3) -- (v17);
\draw[g0] (v10) -- (v22);
\draw[g0] (v4) -- (v23);
\draw[g0] (v5) -- (v11);
\draw[g1] (v6) -- (v12);
\draw[g1] (v7) -- (v18);
\draw[g1] (v0) -- (v8);
\draw[g1] (v13) -- (v19);
\draw[g1] (v1) -- (v14);
\draw[g1] (v9) -- (v15);
\draw[g1] (v2) -- (v16);
\draw[g1] (v3) -- (v20);
\draw[g1] (v17) -- (v21);
\draw[g1] (v4) -- (v10);
\draw[g1] (v5) -- (v22);
\draw[g1] (v11) -- (v23);
\draw[g2] (v11) -- (v17);
\draw[g2] (v0) .. controls (-4.17861,0.85500) and (-4.17861,-0.85500) .. (v18);
\draw[g2] (v1) -- (v19);
\draw[g2] (v2) .. controls (2.21096,-0.46402) and (0.02910,-0.94882) .. (v6);
\draw[g2] (v3) .. controls (-0.02910,0.94882) and (-2.21096,0.46402) .. (v7);
\draw[g2] (v8) -- (v20);
\draw[g2] (v9) -- (v21);
\draw[g2] (v4) -- (v12);
\draw[g2] (v5) -- (v13);
\draw[g2] (v14) .. controls (0.02910,0.94882) and (2.21096,0.46402) .. (v22);
\draw[g2] (v15) .. controls (-2.21096,-0.46402) and (-0.02910,-0.94882) .. (v23);
\draw[g2] (v10) .. controls (4.17861,-0.85500) and (4.17861,0.85500) .. (v16);
\fill[white] (v0) circle[radius=1.1mm];
\begin{scope}[shift={(v0)},x=2mm,y=2mm]
\path[fill=exampleV0] (0,0) circle[radius=0.5];
\end{scope}
\node[vertexlabel] at (-2.63220,2.36500) {$(0,6)$};
\fill[white] (v1) circle[radius=1.1mm];
\begin{scope}[shift={(v1)},x=2mm,y=2mm]
\path[fill=exampleV0] (-0.5,-0.5) rectangle (0.5,0.5);
\end{scope}
\node[vertexlabel] at (-1.36780,1.63500) {$(0,7)$};
\fill[white] (v2) circle[radius=1.1mm];
\begin{scope}[shift={(v2)},x=2mm,y=2mm]
\path[fill=exampleV0] (0,0.5) -- (-0.5,-0.5) -- (0.5,-0.5) -- cycle;
\end{scope}
\node[vertexlabel] at (2.63220,1.63500) {$(0,9)$};
\fill[white] (v3) circle[radius=1.1mm];
\begin{scope}[shift={(v3)},x=2mm,y=2mm]
\path[fill=exampleV0] (0,0.5) -- (-0.5,0) -- (0,-0.5) -- (0.5,0) -- cycle;
\end{scope}
\node[vertexlabel] at (1.36780,2.36500) {$(0,11)$};
\fill[white] (v4) circle[radius=1.1mm];
\begin{scope}[shift={(v4)},x=2mm,y=2mm]
\path[fill=exampleV0] (0,-0.5) -- (-0.5,0.5) -- (0.5,0.5) -- cycle;
\end{scope}
\node[vertexlabel] at (2.00000,-2.73000) {$(0,16)$};
\fill[white] (v5) circle[radius=1.1mm];
\begin{scope}[shift={(v5)},x=2mm,y=2mm]
\path[fill=exampleV0] (-0.18,0.5) -- (0.18,0.5) -- (0.18,0.18) -- (0.5,0.18) -- (0.5,-0.18) -- (0.18,-0.18) -- (0.18,-0.5) -- (-0.18,-0.5) -- (-0.18,-0.18) -- (-0.5,-0.18) -- (-0.5,0.18) -- (-0.18,0.18) -- cycle;
\end{scope}
\node[vertexlabel] at (2.00000,-1.27000) {$(0,19)$};
\fill[white] (v6) circle[radius=1.1mm];
\begin{scope}[shift={(v6)},x=2mm,y=2mm]
\path[fill=exampleV1] (0,0) circle[radius=0.5];
\end{scope}
\node[vertexlabel] at (-1.36780,-2.36500) {$(1,2)$};
\fill[white] (v7) circle[radius=1.1mm];
\begin{scope}[shift={(v7)},x=2mm,y=2mm]
\path[fill=exampleV1] (-0.5,-0.5) rectangle (0.5,0.5);
\end{scope}
\node[vertexlabel] at (-2.63220,-1.63500) {$(1,4)$};
\fill[white] (v8) circle[radius=1.1mm];
\begin{scope}[shift={(v8)},x=2mm,y=2mm]
\path[fill=exampleV1] (0,0.5) -- (-0.5,-0.5) -- (0.5,-0.5) -- cycle;
\end{scope}
\node[vertexlabel] at (-2.00000,2.73000) {$(1,5)$};
\fill[white] (v9) circle[radius=1.1mm];
\begin{scope}[shift={(v9)},x=2mm,y=2mm]
\path[fill=exampleV1] (0,0.5) -- (-0.5,0) -- (0,-0.5) -- (0.5,0) -- cycle;
\end{scope}
\node[vertexlabel] at (-2.00000,1.27000) {$(1,7)$};
\fill[white] (v10) circle[radius=1.1mm];
\begin{scope}[shift={(v10)},x=2mm,y=2mm]
\path[fill=exampleV1] (0,-0.5) -- (-0.5,0.5) -- (0.5,0.5) -- cycle;
\end{scope}
\node[vertexlabel] at (2.63220,-2.36500) {$(1,15)$};
\fill[white] (v11) circle[radius=1.1mm];
\begin{scope}[shift={(v11)},x=2mm,y=2mm]
\path[fill=exampleV1] (-0.18,0.5) -- (0.18,0.5) -- (0.18,0.18) -- (0.5,0.18) -- (0.5,-0.18) -- (0.18,-0.18) -- (0.18,-0.5) -- (-0.18,-0.5) -- (-0.18,-0.18) -- (-0.5,-0.18) -- (-0.5,0.18) -- (-0.18,0.18) -- cycle;
\end{scope}
\node[vertexlabel] at (1.36780,-1.63500) {$(1,19)$};
\fill[white] (v12) circle[radius=1.1mm];
\begin{scope}[shift={(v12)},x=2mm,y=2mm]
\path[fill=exampleV2] (0,0) circle[radius=0.5];
\end{scope}
\node[vertexlabel] at (-2.00000,-2.73000) {$(2,1)$};
\fill[white] (v13) circle[radius=1.1mm];
\begin{scope}[shift={(v13)},x=2mm,y=2mm]
\path[fill=exampleV2] (-0.5,-0.5) rectangle (0.5,0.5);
\end{scope}
\node[vertexlabel] at (-2.00000,-1.27000) {$(2,4)$};
\fill[white] (v14) circle[radius=1.1mm];
\begin{scope}[shift={(v14)},x=2mm,y=2mm]
\path[fill=exampleV2] (0,0.5) -- (-0.5,-0.5) -- (0.5,-0.5) -- cycle;
\end{scope}
\node[vertexlabel] at (-1.36780,2.36500) {$(2,5)$};
\fill[white] (v15) circle[radius=1.1mm];
\begin{scope}[shift={(v15)},x=2mm,y=2mm]
\path[fill=exampleV2] (0,0.5) -- (-0.5,0) -- (0,-0.5) -- (0.5,0) -- cycle;
\end{scope}
\node[vertexlabel] at (-2.63220,1.63500) {$(2,6)$};
\fill[white] (v16) circle[radius=1.1mm];
\begin{scope}[shift={(v16)},x=2mm,y=2mm]
\path[fill=exampleV2] (0,-0.5) -- (-0.5,0.5) -- (0.5,0.5) -- cycle;
\end{scope}
\node[vertexlabel] at (2.63220,2.36500) {$(2,7)$};
\fill[white] (v17) circle[radius=1.1mm];
\begin{scope}[shift={(v17)},x=2mm,y=2mm]
\path[fill=exampleV2] (-0.18,0.5) -- (0.18,0.5) -- (0.18,0.18) -- (0.5,0.18) -- (0.5,-0.18) -- (0.18,-0.18) -- (0.18,-0.5) -- (-0.18,-0.5) -- (-0.18,-0.18) -- (-0.5,-0.18) -- (-0.5,0.18) -- (-0.18,0.18) -- cycle;
\end{scope}
\node[vertexlabel] at (1.36780,1.63500) {$(2,11)$};
\fill[white] (v18) circle[radius=1.1mm];
\begin{scope}[shift={(v18)},x=2mm,y=2mm]
\path[fill=exampleV3] (0,0) circle[radius=0.5];
\end{scope}
\node[vertexlabel] at (-2.63220,-2.36500) {$(3,1)$};
\fill[white] (v19) circle[radius=1.1mm];
\begin{scope}[shift={(v19)},x=2mm,y=2mm]
\path[fill=exampleV3] (-0.5,-0.5) rectangle (0.5,0.5);
\end{scope}
\node[vertexlabel] at (-1.36780,-1.63500) {$(3,2)$};
\fill[white] (v20) circle[radius=1.1mm];
\begin{scope}[shift={(v20)},x=2mm,y=2mm]
\path[fill=exampleV3] (0,0.5) -- (-0.5,-0.5) -- (0.5,-0.5) -- cycle;
\end{scope}
\node[vertexlabel] at (2.00000,2.73000) {$(3,7)$};
\fill[white] (v21) circle[radius=1.1mm];
\begin{scope}[shift={(v21)},x=2mm,y=2mm]
\path[fill=exampleV3] (0,0.5) -- (-0.5,0) -- (0,-0.5) -- (0.5,0) -- cycle;
\end{scope}
\node[vertexlabel] at (2.00000,1.27000) {$(3,9)$};
\fill[white] (v22) circle[radius=1.1mm];
\begin{scope}[shift={(v22)},x=2mm,y=2mm]
\path[fill=exampleV3] (0,-0.5) -- (-0.5,0.5) -- (0.5,0.5) -- cycle;
\end{scope}
\node[vertexlabel] at (2.63220,-1.63500) {$(3,15)$};
\fill[white] (v23) circle[radius=1.1mm];
\begin{scope}[shift={(v23)},x=2mm,y=2mm]
\path[fill=exampleV3] (-0.18,0.5) -- (0.18,0.5) -- (0.18,0.18) -- (0.5,0.18) -- (0.5,-0.18) -- (0.18,-0.18) -- (0.18,-0.5) -- (-0.18,-0.5) -- (-0.18,-0.18) -- (-0.5,-0.18) -- (-0.5,0.18) -- (-0.18,0.18) -- cycle;
\end{scope}
\node[vertexlabel] at (1.36780,-2.36500) {$(3,16)$};
\draw[g0] (-2.7,-4.2) -- (-2.1,-4.2) node[right,black,font=\small] {$j=0$};
\draw[g1] (-0.7,-4.2) -- (-0.09999999999999998,-4.2) node[right,black,font=\small] {$j=1$};
\draw[g2] (1.3,-4.2) -- (1.9,-4.2) node[right,black,font=\small] {$j=2$};
\end{tikzpicture}
\par\smallskip
(b) Graph associated with the codeword
\end{minipage}
\caption{$J=3,L=4,P=24$ code and a weight-24 codeword. (a) The $72\times96$ check matrix $H_{24}(E_4)$ is above $c^{\mathsf T}$ in the same column order. The 72 colored entries are the 1s in the 24 support columns of $c$, highlighted in the color of block row $j$ used for the edges in (b). Other 1s are light gray, and 0s are white. The 24 symbols in the lower strip denote the 1s of $c$ and match the shapes and colors of the corresponding vertices in (b); blank positions denote 0. Vertex symbols use four block-column colors and six shapes assigned in increasing order of $t$ within each block. Light grid lines separate individual entries; darker gray lines separate the $24\times24$ blocks. Rows $(j,s)$ and columns $(\ell,t)$ are in lexicographic order. (b) Vertex labels give the 1-positions $(\ell,t)$, with an edge joining each pair at a check. Solid, dashed, and dotted edges have colors $j=0,1,2$, respectively. Line crossings are not vertices.}
\label{fig:example34}
\end{figure}

For example, take the first code in Table~\ref{tab:examples}, with exponent matrix and lift size
\begin{equation}
 E_4=\begin{pmatrix}
 0&0&0&0\\
 0&1&2&4\\
 0&7&15&5
 \end{pmatrix},\qquad P=24.
 \label{eq:example}
\end{equation}
Figure~\ref{fig:example34}(a) shows the $72\times96$ check matrix. Its column weight is 3, its row weight is 4, and the kernel in \eqref{eq:code} has dimension $k=4\cdot24-70=26$, using the rank 70. Its minimum distance 24 is verified by the exhaustive enumeration described in Appendix~\ref{app:computation}. Consider the codeword whose 1-positions in each block column are
\begin{equation*}
\begin{aligned}
T_0&=\{6,7,9,11,16,19\},\\
T_1&=\{2,4,5,7,15,19\},\\
T_2&=\{1,4,5,6,7,11\},\\
T_3&=\{1,2,7,9,15,16\}.
\end{aligned}

\end{equation*}
Thus $c_{\ell,t}=1$ exactly when $t\in T_\ell$. Each set has six elements, so $\wt(c)=24$. Counting the 1s at each check gives 38 checks with zero, 32 with two, and two with four. All counts are even, verifying $H_{24}(E_4)c=0$, the codeword condition in \eqref{eq:code}.

To specify $\Pi$, order the positions $(\ell,t)$ lexicographically at each check and pair consecutive positions. At check $(j,s)=(0,7)$, pair $(0,7)$ with $(1,7)$ and $(2,7)$ with $(3,7)$. At check $(1,6)$, pair $(0,6)$ with $(1,5)$ and $(2,4)$ with $(3,2)$. The pairing is unique at every other nonempty check. The graph in Fig.~\ref{fig:example34}(b) is connected and 3-regular, with 24 vertices and 36 edges; the 12 edges of each color form a perfect matching. For example, the neighbors of $(0,6)$ in colors 0, 1, and 2 are $(2,6)$, $(1,5)$, and $(3,1)$, respectively. On the color-1 edge, $5-6=0-1=e_{10}-e_{11}$, illustrating the relation between positions and exponents along an edge.
\FloatBarrier

\Needspace{6\baselineskip}
\begin{lemma}[Properties of the graph of a codeword]
\label{lem:graph-properties}
Put $w=\wt(c)$. The graph $G(c,\Pi)$ has the following properties.
\begin{enumerate}[label=\arabic*)]
\item Each vertex is incident to exactly one edge of each color. Thus each color forms a perfect matching: disjoint pairs covering every vertex exactly once. Each vertex has degree $J$, each color has $w/2$ edges, and the total number of edges is $Jw/2$.
\item Write the positions of vertices $u,v$ as $(\ell_u,t_u)$ and $(\ell_v,t_v)$. Giving an edge of color $j$ the orientation $u\to v$ for calculation yields
\begin{equation}
 t_v-t_u\equiv e_{j\ell_u}-e_{j\ell_v}\pmod P.
 \label{eq:edge}
\end{equation}
The column indices at the two ends of an edge are distinct.
\item Traverse a cycle as $u_0,u_1,\ldots,u_m=u_0$, and let $j_k$ be the color of edge $u_k\to u_{k+1}$. Then
\begin{equation}
 \sum_{k=0}^{m-1}\bigl(e_{j_k\ell_{u_k}}-e_{j_k\ell_{u_{k+1}}}\bigr)
 \equiv0\pmod P.
 \label{eq:graph-cycle}
\end{equation}
Thus the total position change around a cycle is zero modulo $P$. This includes cycles of length two formed by parallel edges.
\item The vector with 1s exactly at the vertices of any connected component is a nonzero codeword. Consequently, if $c$ is a nonzero codeword of minimum weight, then $G(c,\Pi)$ is connected.
\end{enumerate}
\end{lemma}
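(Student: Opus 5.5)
The plan is to check each of the four properties directly from Definition~\ref{def:codeword-graph}, using only the fact that every block $(j,\ell)$ is a single circulant permutation matrix. For property 1), fix a vertex $(\ell,t)\in S$ and a color $j$. In block row $j$ the vertex is incident to exactly one check, namely $(j,t+e_{j\ell})$, because the block $(j,\ell)$ is a permutation matrix. At that check the vertices of $S$ are partitioned into pairs, so $(\ell,t)$ lies in exactly one color-$j$ edge. Hence each color class is a perfect matching on $S$. This forces $w$ to be even and gives $w/2$ edges per color. Summing over the $J$ colors gives degree $J$ at each vertex and $Jw/2$ edges in total.

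For property 2), an edge of color $j$ joining $u$ and $v$ comes from a common check $(j,s)$. Therefore $t_u+e_{j\ell_u}\equiv s\equiv t_v+e_{j\ell_v}\pmod P$, and rearranging gives \eqref{eq:edge}. The two endpoints are distinct vertices at the same check. If they had $\ell_u=\ell_v$, then \eqref{eq:edge} would give $t_u\equiv t_v$ and the two vertices would coincide, which is impossible. So the column indices differ.

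For property 3), I will sum \eqref{eq:edge} along the oriented cycle. The left-hand sides telescope to $t_{u_m}-t_{u_0}=0$, which yields \eqref{eq:graph-cycle}. A cycle of length two formed by two parallel edges is handled in the same way, since the derivation never uses distinctness of the edges beyond their orientation.

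For property 4), let $K$ be the vertex set of a connected component and $c_K$ its indicator vector. Fix a check $(j,s)$. The vertices of $S$ at this check are paired by color-$j$ edges, and each such edge lies entirely inside one component. So the vertices of $K$ at $(j,s)$ form a union of pairs, and their number is even. Hence $H_P(E)c_K=0$, and $c_K\ne0$ because $K$ is nonempty. If $c$ has minimum weight and $G(c,\Pi)$ had at least two components, any one component would give a nonzero codeword of strictly smaller weight, a contradiction. So $G(c,\Pi)$ is connected.

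I expect no real obstacle here. The only points needing care are that incidence to a single check per color relies on every block being a permutation matrix with no zero blocks, and that parallel edges must be allowed in 3).
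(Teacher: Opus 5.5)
Your proposal is correct and follows essentially the same route as the paper's proof. One incident check per block row gives the perfect matchings, and equating $t_u+e_{j\ell_u}$ with $t_v+e_{j\ell_v}$ at the shared check gives \eqref{eq:edge} and distinct column indices; telescoping around a cycle gives \eqref{eq:graph-cycle}, and because paired vertices are always kept or removed together, restricting to a component preserves every check, which yields property 4).
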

\begin{IEEEproof}
Each variable node is incident to exactly one check in each block row $j$ and occurs in exactly one pair there, proving 1). The two ends of an edge of color $j$ meet the same check, so $t_u+e_{j\ell_u}\equiv t_v+e_{j\ell_v}\pmod P$. This gives \eqref{eq:edge} in 2). If $\ell_u=\ell_v$, then $t_u=t_v$, making the two vertices identical; hence their column indices differ. Summing \eqref{eq:edge} around a cycle cancels each $t_u$ once with each sign, proving 3).

For 4), retain only one connected component and set all other positions to zero. The two vertices in each pair at a check are joined by an edge, so they are either both retained or both removed. Every check therefore still contains an even number of 1s and is satisfied. The component is nonempty, so the resulting codeword is nonzero. If the graph of a minimum-weight nonzero codeword were disconnected, retaining one component would give a nonzero codeword of smaller weight, a contradiction.
\end{IEEEproof}

The graph of a general codeword may be disconnected. In the proof of Theorem~\ref{thm:main}, choosing a nonzero codeword of minimum weight ensures connectivity by part 4) of Lemma~\ref{lem:graph-properties}. This allows a spanning tree in which all vertices can be reached from one root.

\section{Explicit construction}
\label{sec:explicit}
To attain the distance bound for every fixed $2\le J<L$, we use the bound on the number of exponents appearing in a cycle of a low-weight codeword. Choosing successive powers of a fixed integer controls cancellation in sums of exponents with small integer coefficients. Taking $P$ sufficiently large then turns the relevant congruences into integer equalities and allows us to exclude low-weight codewords. The following theorem gives the exponent matrix and a sufficient condition on $P$.

\Needspace{8\baselineskip}
\begin{theorem}[Attainment of the minimum-distance bound]
\label{thm:main}
Fix $2\le J<L$ and put $U=(J+1)!$. Define the fixed integer exponent matrix
\begin{equation}
 e_{j\ell}=U^{jL+\ell},\qquad
 R=U^{(J-1)L}(U^{L-1}-1).
 \label{eq:construction}
\end{equation}
Then, for every integer $P>(U-2)R$,
\begin{equation}
 \dmin(C_P(E))=U.
 \label{eq:main}
\end{equation}
\end{theorem}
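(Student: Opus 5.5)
The proof has two halves. The upper bound $\dmin\le U$ is the known factorial bound of MacKay--Davey and Smarandache--Vontobel~\cite{sv2011}, which holds for every exponent matrix and every $P$; I will cite it. The real work is the lower bound. Suppose for contradiction that $c$ is a nonzero codeword of minimum weight with $w=\wt(c)<U$, so $w\le U-2$ because weights are even. Fix a pairing $\Pi$. By Lemma~\ref{lem:graph-properties}~4), $G(c,\Pi)$ is connected. Choose a spanning tree rooted at some vertex. Every non-tree edge then closes a fundamental cycle, and by Lemma~\ref{lem:graph-properties}~3) each such cycle gives a congruence
\[\sum_{j,\ell} a_{j\ell}\,e_{j\ell}\equiv 0 \pmod P,\]
where $a_{j\ell}\in\Z$ is the net number of traversals of edges of color $j$ leaving (minus entering) column $\ell$.

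\emph{Step 1: congruences become integer identities.} A fundamental cycle has length at most $w$, so the total absolute coefficient satisfies $\sum|a_{j\ell}|\le 2w$. Each edge of color $j$ contributes $e_{j\ell_u}-e_{j\ell_v}$, which is a difference of two powers $U^{jL+\ell}$ within the same block row. So the sum is a signed combination of at most $w\le U-2$ such differences, each of absolute value at most $\max_{j}(U^{jL+L-1}-U^{jL})$. The top row gives the bound $R=U^{(J-1)L}(U^{L-1}-1)$, and summing over the cycle bounds the absolute value of the integer sum by $(U-2)R$. Since $P>(U-2)R$, the congruence is an equality over $\Z$.

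\emph{Step 2: the coefficients vanish.} The combination is an integer written in base $U$ with digit coefficients $a_{j\ell}$. To conclude that all $a_{j\ell}=0$, I need $|a_{j\ell}|<U/2$, or some cleaner carry argument. A single coefficient has $|a_{j\ell}|$ at most the number of cycle edges touching column $\ell$ in color $j$, which is at most $w/2<U/2$, since color $j$ is a perfect matching with $w/2$ edges. A standard uniqueness argument for balanced base-$U$ expansions (reduce modulo $U$, then divide, inductively) then gives $a_{j\ell}=0$ for all $j,\ell$. I expect this to be the delicate point: the bound on $|a_{j\ell}|$ must come out strictly below $U/2$, possibly using that the $w/2$ edges are split between two endpoint orientations.

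\emph{Step 3: monomial representation and term count.} With every cycle vector zero, potentials are well defined. Assign variables $x_{j\ell}$, and give the root the monomial $1$. Along a tree edge $u\to v$ of color $j$, multiply by $x_{j\ell_u}x_{j\ell_v}^{-1}$. Because every cycle has zero coefficients, the same rule is consistent on non-tree edges as well. Then, for each column $\ell$, form the Laurent polynomial $f_\ell$ that sums the monomials of the vertices in block column $\ell$. The edge rule makes the two ends of a color-$j$ edge produce equal monomials $m_u x_{j\ell_u}=m_v x_{j\ell_v}$. Hence $\sum_\ell x_{j\ell} f_\ell=0$ over $\F_2$ for every $j$, with exact pairwise cancellation.

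Next I must show that distinct vertices in one block column receive distinct monomials, so that $\sum_\ell(\text{number of terms of }f_\ell)=w$. Specializing $x_{j\ell}\mapsto z^{e_{j\ell}}$ and reducing modulo $z^P-1$ recovers the positions $t_u$ up to a common shift, by \eqref{eq:edge}. So equal monomials in the same column would force equal positions, that is, the same vertex. Finally I apply the term-count theorem of Draisma--Kahle--Wiersig~\cite{dkw2023}. A nonzero solution $(f_\ell)$ of $J$ generic linear relations $\sum_\ell x_{j\ell}f_\ell=0$ with $L>J$ unknowns has at least $(J+1)!$ terms in total, which I take from the appendix's algebraic derivation. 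This contradicts $w<U$, so $\dmin=U$.
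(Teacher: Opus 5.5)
Your proposal is correct and follows the paper's proof essentially step for step. The steps are the connected graph of a minimum-weight codeword, fundamental-cycle sums forced to be integer zeros by $P>(U-2)R$, vanishing of every coefficient by base-$U$ separation, and then the Laurent-monomial encoding with within-column distinctness and Lemma~\ref{lem:generic}.

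The only variation is Step 2, where you use a reduce-mod-$U$ induction instead of the paper's argument that the top nonzero power dominates the lower ones. That induction needs only $|a_{j\ell}|\le U-1$, so the crude bound $|a_{j\ell}|\le w\le U-2$ already suffices. The bound $|a_{j\ell}|<U/2$ that you flag as the delicate point is therefore unnecessary, though your estimate $|a_{j\ell}|\le w/2$ is also true.
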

The quantity $R$ in \eqref{eq:construction} is the largest difference between two entries in the same row. The array is independent of $P$; its entries are reduced modulo $P$ when forming the lift. This construction proves existence, but uses large exponents. The theorem also gives a lower bound $\dmin\ge D$ for any target $D\le(J+1)!$. For example, $J=2,3,4$ gives distances 6, 24, and 120, respectively, for every fixed $L>J$.

\begin{IEEEproof}[Proof of Theorem~\ref{thm:main}]
With $U=(J+1)!$ as in Theorem~\ref{thm:main}, the known upper bound is $\dmin\le U$~\cite[Corollary~9]{sv2011}, so it suffices to prove $\dmin\ge U$. Suppose a nonzero codeword of weight less than $U$ exists. We will associate a monomial with each of its 1-positions and obtain a polynomial vector of the same weight, contradicting the term-count bound in Lemma~\ref{lem:generic} of the appendix. Steps 1--3 establish the cycle property needed to preserve all check equations under this correspondence.

\emph{1) Choose the connected graph of a minimum-weight codeword.} Suppose $P>(U-2)R$ and a nonzero codeword of weight less than $U$ exists. Choose a nonzero codeword $c$ of minimum weight and put $w=\wt(c)$. The even-weight property proved in Section~\ref{sec:code-definition} gives $w\le U-2$ from $w<U$. Choose any pairings $\Pi$ as in Definition~\ref{def:codeword-graph} and form $G(c,\Pi)$. Part 4) of Lemma~\ref{lem:graph-properties} gives connectivity, and part 1) gives a perfect matching in each color.

\emph{2) Add the position changes around a cycle.} Choose a spanning tree, a set of edges joining all vertices without any cycle. There is a unique simple path in the tree between any two vertices. Adding an edge outside the tree therefore gives one cycle, consisting of that edge and the tree path. Write $m$ for the number of edges in this fundamental cycle; then $m\le w$. Each exponent difference along an edge uses two entries of the same row, so its absolute value is at most the maximum row range $R$ in \eqref{eq:construction}. The integer sum $s$ of the right-hand side of \eqref{eq:edge} around the cycle therefore satisfies $|s|\le wR\le(U-2)R<P$. The cycle condition \eqref{eq:graph-cycle} also gives $s\equiv0\pmod P$. Since the only multiple of $P$ with absolute value less than $P$ is zero, $s=0$ as an integer.

\emph{3) Show that each exponent has coefficient zero.} Substituting $e_{j\ell}=U^{jL+\ell}$ from \eqref{eq:construction} expresses this equality as
\begin{equation}
 \sum_{a=0}^{JL-1}q_aU^a=0,\qquad |q_a|\le w\le U-2.
 \label{eq:digits}
\end{equation}
In \eqref{eq:digits}, put $a=jL+\ell$; the integer $q_a$ counts the positive occurrences of $e_{j\ell}$ in the cycle sum minus its negative occurrences. Powers of $U$ are used so that the largest power cannot be canceled by all smaller powers combined. Indeed, if $h$ were the largest index with $q_h\ne0$, then
\begin{equation}
 \left|\sum_{a<h}q_aU^a\right|
 \le (U-2)\frac{U^h-1}{U-1}<U^h\le |q_h|U^h,
 \label{eq:dominance}
\end{equation}
This uses the coefficient bound in \eqref{eq:digits}, the geometric sum, and $|q_h|\ge1$ for the nonzero integer $q_h$. Inequality~\eqref{eq:dominance} contradicts $q_hU^h=-\sum_{a<h}q_aU^a$ in \eqref{eq:digits}. Thus all coefficients $q_a$ are zero. In other words, after collecting the exponent differences around each fundamental cycle, the integer coefficient of every exponent is zero. We use this equality of coefficients below.

The bound on the coefficients is essential here. For example, the chosen exponents satisfy $Ue_{00}-e_{01}=0$, but the coefficient $U$ is outside the permitted range $|q_a|\le U-2$. Every coefficient arising from a fundamental cycle of the low-weight codeword lies within that range, so the preceding estimate excludes any relation with a nonzero coefficient.

\emph{4) Represent the codeword positions and checks by polynomials.} Using step 3, we represent the 1-positions in each block column by polynomial terms. We first express the position change from the root to each vertex as an integer linear combination of the exponents, then associate a monomial with its coefficient vector.

\emph{Coefficient vectors along the spanning tree.} A common cyclic shift preserves the number of 1s at each corresponding check and the codeword weight. Subtract the root position from every position, so that the root has lift index zero. Continue to denote the shifted codeword and positions by $c$ and $t_u$. Sum \eqref{eq:edge} along the tree path from the root to $u$, and let $r_{u,j\ell}$ be the integer coefficient of $e_{j\ell}$ in this sum. These coefficients define a vector $r_u\in\Z^{JL}$. At the root the path is empty, giving $r_u=0$; at every vertex,
\begin{equation}
 t_u\equiv\sum_{j,\ell}r_{u,j\ell}e_{j\ell}\pmod P.
 \label{eq:position-map}
\end{equation}
Write $b_{j\ell}$ for the standard basis vector with its only nonzero entry at $(j,\ell)$. Traversing an edge $u\to v$ of color $j$ gives the coefficient vector $b_{j\ell_u}-b_{j\ell_v}$. Traversing the tree path from $u$ to $v$ gives $r_v-r_u$: reverse the path from the root to $u$, then follow the path from the root to $v$. The contributions from edges shared by the two paths cancel with opposite signs. For a tree edge, this already gives $r_v-r_u=b_{j\ell_u}-b_{j\ell_v}$.

For an edge $u\to v$ outside the tree, traverse that edge and then return from $v$ to $u$ along the tree. The coefficient vector of this fundamental cycle is
\begin{equation*}
 (b_{j\ell_u}-b_{j\ell_v})+(r_u-r_v)=0.
\end{equation*}
The equality to zero follows from step 3, where all coefficients $q_a$ in \eqref{eq:digits} were shown to vanish. Thus the coefficient vectors obtained by traversing an edge directly and by following the tree path agree, for edges both inside and outside the tree. Rearranging gives, for every edge of color $j$,
\begin{equation}
 r_u+b_{j\ell_u}=r_v+b_{j\ell_v}\quad\text{in }\Z^{JL}.
 \label{eq:coefficient-edge}
\end{equation}
The two sides are the coefficient vectors for the position changes from the root through either paired vertex to their common check. They agree as integer vectors, not merely after evaluating the positions modulo $P$. Consequently, the paired terms will remain equal when we associate an independent indeterminate with each exponent below.

\emph{From coefficient vectors to monomials.} Assign an independent indeterminate $z_{j\ell}$ to each exponent and write $Z=(z_{j\ell})$. Associate an integer vector $r=(r_{j\ell})$ with the monomial
\begin{equation}
 z^r=\prod_{j,\ell}z_{j\ell}^{r_{j\ell}}.
 \label{eq:monomial}
\end{equation}
Here $r_{j\ell}$ is an integer exponent, whereas the coefficient of the monomial is $1\in\F_2$. Coefficients are added modulo two, while exponents remain integers. For example, $z_{j\ell}+z_{j\ell}=0$, but $z_{j\ell}z_{j\ell}=z_{j\ell}^{2}$. Exponents are not reduced modulo $P$ either. The zero coefficient vector at the root corresponds to the monomial 1, and the difference $e_{j0}-e_{j1}$ corresponds to $z_{j0}z_{j1}^{-1}$. Polynomials allowing negative integer powers are called Laurent polynomials. Since the indeterminates are independent, $z^r=z^s$ holds exactly when $r=s$. Moreover, $z^{r+s}=z^rz^s$, so addition of coefficient vectors corresponds to multiplication of monomials.

For each block column $\ell$, add the monomials corresponding to its 1-positions, each with coefficient 1, to define
\begin{equation}
 p_\ell(Z)=\sum_{u:\ell_u=\ell}z^{r_u},\qquad
 p(Z)=(p_0(Z),\ldots,p_{L-1}(Z))^{\mathsf T}.
 \label{eq:polynomial-vector}
\end{equation}
Thus $p_\ell$ describes the occupied positions in that block column by its terms. Coefficients are added in the binary field, so two identical monomials cancel.

\emph{From parity checks to polynomial identities.} Moving from a variable in block column $\ell$ to its check in block row $j$ adds $e_{j\ell}$ to the position. This adds $b_{j\ell}$ to the coefficient vector and therefore multiplies its monomial by $z^{b_{j\ell}}=z_{j\ell}$. The polynomial sum of all contributions to checks in block row $j$ is thus $\sum_\ell z_{j\ell}p_\ell$. For vertices $u,v$ paired in color $j$, the coefficient equality \eqref{eq:coefficient-edge} and the definition of $z^r$ in \eqref{eq:monomial} give
\begin{equation*}
 z_{j\ell_u}z^{r_u}=z^{b_{j\ell_u}+r_u}=z^{b_{j\ell_v}+r_v}=z_{j\ell_v}z^{r_v}.
\end{equation*}
The two terms therefore cancel. For example, if the root is in column 0 and is paired in color $j$ with a vertex in column 1, their contributions to the check are
\begin{equation*}
 z_{j0}\cdot1+z_{j1}\cdot(z_{j0}z_{j1}^{-1})
 =z_{j0}+z_{j0}=0.
\end{equation*}
This is the polynomial expression of two 1s contributing even parity at the same check.

Substituting the definition of $p_\ell$ in \eqref{eq:polynomial-vector} and counting each vertex once in its block column gives
\begin{equation}
 \sum_\ell z_{j\ell}p_\ell
 =\sum_\ell\sum_{u:\ell_u=\ell}z_{j\ell}z^{r_u}
 =\sum_u z^{b_{j\ell_u}+r_u}=0.
 \label{eq:polynomial-checks}
\end{equation}
In the final sum, the two terms in each color-$j$ pair are equal. Every vertex belongs to exactly one such pair, so all terms cancel. The matrix equation $Zp=0$ collects \eqref{eq:polynomial-checks} for all $j$.

\emph{From weight to term count.} Cancellation within a check equation must be distinguished from cancellation within an individual $p_\ell$. If distinct vertices in the same column had $r_u=r_v$, \eqref{eq:position-map} would give $t_u\equiv t_v\pmod P$. Since $t_u,t_v$ are elements of $\Z/P\Z$, they would represent the same position, making them the same variable node. Thus the terms within $p_\ell$ are distinct, and their number equals the number of 1s in that block column. Terms from different columns belong to different components and cannot cancel each other. Defining the weight of a polynomial vector as the total number of terms in its components, we have constructed
\begin{equation}
 Zp=0,\qquad p\ne0,\qquad \wt(p)=\wt(c)=w.
 \label{eq:weight-preservation}
\end{equation}
Lemma~\ref{lem:generic} in the appendix states that any nonzero vector satisfying these polynomial check equations has at least $U$ terms. The check equations \eqref{eq:polynomial-checks} and the nonzeroness and weight equality in \eqref{eq:weight-preservation} allow us to apply \eqref{eq:generic} to $p$, giving $w=\wt(p)\ge U$, contrary to the assumption $w<U$. Hence $\dmin\ge U$, and equality follows from the known upper bound \cite[Corollary~9]{sv2011}.

\end{IEEEproof}

\section{Random construction}
\label{sec:random}
The explicit construction guarantees the distance by choosing widely separated exponents. We now choose each exponent independently and uniformly from $0,\ldots,P-1$. A low-weight codeword still requires the cycle conditions used in the preceding section. By bounding the probability of each condition and the total number of conditions, we show that the probability of falling below the distance bound tends to zero as $P$ grows.

\begin{corollary}[Uniform random exponents]
\label{cor:random}
Fix $2\le J<L$ and choose the entries of $E$ independently and uniformly from $\Z/P\Z$. With $U=(J+1)!$, for every integer $P\ge2$ we have
\begin{equation}
 \Pr[\dmin(C_P(E))<U]\le\min\{1,K_{J,L}/P\},
 \label{eq:probability}
\end{equation}
Here $K_{J,L}$ depends only on $J,L$; an explicit value is given in \eqref{eq:constant}.
\end{corollary}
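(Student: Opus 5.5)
Our plan is to reuse the proof of Theorem~\ref{thm:main} verbatim except for Steps 2--3, which established that every fundamental cycle has zero integer coefficient vector. Steps 1 and 4 (minimum-weight codeword, connected graph, spanning tree, monomial representation, and the term-count Lemma~\ref{lem:generic}) only use that all fundamental-cycle coefficient vectors vanish. So the event $\dmin(C_P(E))<U$ is contained in the event that there exist a codeword weight $w\le U-2$ and a cycle of its graph whose coefficient vector $q\in\Z^{JL}$ is nonzero yet satisfies $\sum_{j,\ell}q_{j\ell}e_{j\ell}\equiv0\pmod P$. The task is therefore to bound the probability of this event by a union bound over a finite, $P$-independent family of coefficient vectors.

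First, we enumerate the possible vectors $q$. A fundamental cycle has length $m\le w\le U-2$. Each edge of color $j$ between columns $\ell\ne\ell'$ contributes $b_{j\ell}-b_{j\ell'}$. Hence $q$ is a sum of at most $U-2$ such vectors, and it lies in the finite set
\[
Q=\Bigl\{\textstyle\sum_{k<m}(b_{j_k\ell_k}-b_{j_k\ell'_k}) : m\le U-2\Bigr\}\setminus\{0\},
\]
with $|Q|\le\sum_{m\le U-2}(JL(L-1))^m$. We use $|Q|$, or a cruder explicit bound, as $K_{J,L}$. The key point is that $Q$ depends only on $J,L$, not on $P$ or on the codeword. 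The union bound then gives $\Pr[\dmin<U]\le\sum_{q\in Q}\Pr[\sum q_{j\ell}e_{j\ell}\equiv0]$.

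Second, we bound each term, and this is the main obstacle, because $P$ may be composite. For a fixed nonzero $q$, let $g=\gcd(q_{j\ell},P)$ over some index with $q_{j\ell}\ne0$. Conditioning on all other exponents, $q_{j\ell}e_{j\ell}$ is uniform over the subgroup $g\Z/P\Z$ of size $P/g$. The congruence then holds with probability at most $g/P\le|q_{j\ell}|/P\le(U-2)/P$. Since every coordinate satisfies $|q_{j\ell}|\le m\le U-2$, this yields $\Pr\le (U-2)/P$ for each $q$, uniformly in $P$, with no primality assumption.

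Combining the two steps gives $\Pr[\dmin<U]\le(U-2)|Q|/P$. We define $K_{J,L}=(U-2)\sum_{m=1}^{U-2}(JL(L-1))^m$ as the explicit constant in \eqref{eq:constant}. The trivial bound $1$ gives the minimum. One detail to check is that the reduction in Step 2 of Theorem~\ref{thm:main} no longer needs $|s|<P$, since we now work directly with congruences. The subsequent Step 4 uses only $q=0$ as an integer vector, which holds on the complement of the bad event.
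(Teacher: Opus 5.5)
Your proof is correct. Its core ingredients match the paper's. When every fundamental-cycle vector vanishes, step 4 of the proof of Theorem~\ref{thm:main} and Lemma~\ref{lem:generic} exclude a codeword of weight below $U$. A fixed nonzero linear form then vanishes modulo $P$ with probability $O(1/P)$, with no primality assumption.

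The difference is in what the union bound runs over. The paper enumerates configurations: $J$ perfect matchings on $w$ labeled vertices together with block-column labels. For each connected configuration it fixes a spanning tree, a cycle ordering and traversal directions before sampling. Each configuration then carries a predetermined $q$, and its realization event $\mathcal A$ lies in $\{q\cdot E\equiv0\pmod P\}$. Counting $L^w((w-1)!!)^J$ configurations gives the constant in \eqref{eq:constant}.

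You instead take the union over the deterministic set $Q$ of all possible nonzero cycle vectors. Because $Q$ does not depend on $E$, it is harmless that the offending codeword and its cycle depend on $E$, so the pre-sampling bookkeeping is unnecessary. Configurations that produce the same congruence are also merged automatically. This is exactly the removal of duplicate conditions that the paper mentions only as a possible improvement. Your single-coordinate conditioning argument gives the same kind of per-event bound as the paper's exact value $\gcd(P,q_0,\ldots,q_{JL-1})/P$.

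You do prove the inequality with your own constant $(U-2)\sum_{m=1}^{U-2}(JL(L-1))^m$ rather than the value in \eqref{eq:constant}, and neither constant dominates the other. Yours is about $4\times10^{35}$ against the paper's roughly $10^{45}$ at $J=3$, $L=4$, but $90480$ against $2934$ at $J=2$, $L=3$. The statement only needs some constant depending on $J,L$, so this is a legitimate variant. It would not, however, reproduce the $J=2$ thresholds in Table~\ref{tab:lift-thresholds}.
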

\begin{IEEEproof}
\emph{1) Fix cycle conditions before sampling exponents.} For each even $w$ with $2\le w<U$, fix $J$ perfect matchings on $w$ labeled vertices and a block-column index for each vertex. Retain only connected graphs, and fix a spanning tree, an ordering of its fundamental cycles, and a direction in which to traverse each cycle. For each fundamental cycle, form its integer coefficient vector by counting the positive occurrences of each exponent in \eqref{eq:graph-cycle} minus its negative occurrences. These counts depend only on the edge colors and the block-column indices of the vertices; they use neither exponent values nor vertex lift indices. If any coefficient vector is nonzero, let $q$ be the first such vector in the fixed ordering. Thus $q$ is determined before the exponents are sampled.

For a fixed choice, let $\mathcal A$ be the event that lift indices can be assigned to its vertices so that they occupy distinct variable-node positions in the prescribed block columns and the two vertices in each pair meet the same check. Such an assignment gives a weight-$w$ codeword, because the contributions to each check cancel pair by pair; we then say that the choice is realized. Suppose a choice with all coefficient vectors zero were realized. The construction in step 4 of the proof of Theorem~\ref{thm:main} would produce a nonzero polynomial vector $p$ satisfying \eqref{eq:weight-preservation}. Distinct vertices in the same block column have distinct lift indices, so no terms cancel within a component and the total number of terms is $w$. This contradicts Lemma~\ref{lem:generic}. Hence, if all coefficient vectors are zero, $\mathcal A$ is empty and its probability is zero.

For every other choice, the nonzero vector $q$ fixed in step 1 exists. Whichever assignment of lift indices realizes that choice, summing \eqref{eq:edge} around the selected cycle cancels the lift indices and yields the same congruence $q\cdot E\equiv0\pmod P$. Thus
\begin{equation}
 \mathcal A\subseteq\{E:q\cdot E\equiv0\pmod P\}.
 \label{eq:realization-condition}
\end{equation}
In \eqref{eq:realization-condition}, $q\cdot E=\sum_{j,\ell}q_{jL+\ell}e_{j\ell}$. The indexing $a=jL+\ell$ is the same as in \eqref{eq:digits}, but here $q$ is fixed before choosing numerical exponents. A fundamental cycle has at most $w$ edges, so $|q_a|\le w$.

\emph{2) Bound the probability of one cycle condition.} Let $g=\gcd(P,q_0,\ldots,q_{JL-1})$ be the greatest common divisor. Every remainder of $q\cdot E$ is a multiple of $g$. Conversely, B\'ezout's identity expresses $g$ as an integer linear combination of $P,q_0,\ldots,q_{JL-1}$, so $g$ and all its multiples are attained modulo $P$. The possible remainders are therefore $0,g,2g,\ldots,P-g$, giving $P/g$ values. For any two of these remainders, adding a suitable fixed matrix entrywise modulo $P$ gives a bijection between the exponent matrices producing them. Since $E$ is uniform over all exponent matrices, the remainders are equally likely, and zero occurs with probability $g/P$. Choose a nonzero coefficient $q_a$. Since $g$ is a positive divisor of $|q_a|$, we have $1\le g\le|q_a|\le w$. Hence
\begin{equation}
 \Pr[q\cdot E\equiv0\pmod P]
 =\frac{\gcd(P,q_0,\ldots,q_{JL-1})}{P}\le\frac{w}{P}.
 \label{eq:gcd}
\end{equation}
\emph{3) Count graphs and column assignments.} There are $L^w$ assignments of block-column indices to the $w$ labeled vertices. To form a perfect matching of one color, choose a partner for the vertex with the smallest label in $w-1$ ways, remove the pair, and repeat. This gives $(w-1)!!=(w-1)(w-3)\cdots1$ perfect matchings, and $((w-1)!!)^J$ choices for $J$ colors. This count includes disconnected graphs and assignments that cannot be realized by codewords, so the total number of choices considered in step 1 is at most $L^w((w-1)!!)^J$.

The event inclusion \eqref{eq:realization-condition} and the probability bound \eqref{eq:gcd} give $\Pr[\mathcal A]\le w/P$ for each choice. The event $\mathcal A$ already allows the existence of any assignment of lift indices. We have bounded one necessary condition shared by all these assignments, so no factor $P^w$ is needed. If $\dmin(C_P(E))<U$, a minimum-weight nonzero codeword has even weight below $U$ and a connected graph by Lemma~\ref{lem:graph-properties}. Labeling its vertices realizes one of the choices counted in step 3. Thus the event that the distance is below $U$ is contained in the union of these realization events. Apply the union bound to sum the probability bounds over these choices. No independence between events is required. Define the part that does not depend on $P$ as
\begin{equation}
 K_{J,L}=\sum_{\substack{2\le w<U\\w\text{ even}}}wL^w((w-1)!!)^J.
 \label{eq:constant}
\end{equation}
The probability estimate just described becomes
\begin{equation}
\begin{split}
 \Pr[\dmin(C_P(E))<U]
 &\le\sum_{\substack{2\le w<U\\w\text{ even}}}
 L^w((w-1)!!)^J\frac{w}{P}\\
 &=\frac1P\sum_{\substack{2\le w<U\\w\text{ even}}}
 wL^w((w-1)!!)^J=\frac{K_{J,L}}{P}.
\end{split}
 \label{eq:union-bound}
\end{equation}
Here we used \eqref{eq:constant}. Together with the upper bound one on any probability, this gives \eqref{eq:probability}.
\end{IEEEproof}
Equation~\eqref{eq:gcd} does not require $P$ to be prime, so the probability bound applies to composite lift sizes as well. The constant $K_{J,L}$ is large because it counts all pairings and column assignments; \eqref{eq:probability} gives only the trivial upper bound one when $P\le K_{J,L}$. It therefore does not predict success rates at small $P$. Removing duplicate conditions can improve the bound. Examples at small lift sizes are given in Section~\ref{sec:examples}. Numerical sufficient lift sizes for both constructions are given in Appendix~\ref{app:lift-thresholds}.

\Needspace{12\baselineskip}
\section{Examples}
\label{sec:examples}
The explicit construction guarantees attainment of the distance bound, and the random construction gives the attainment probability at large $P$. At small lift sizes, we aim to attain the bound 24 for $J=3$ and exceed 24 for $J=4$. Following the general construction, we list arrays of distance 24 for $J=3$ and distance at least 28 for $J=4$ in a common table.
\begin{table}[htp]
\centering
\normalsize
\setlength{\tabcolsep}{6pt}
\setlength{\arraycolsep}{4pt}
\renewcommand{\arraystretch}{0.92}
\caption{Exponent matrices and parameters of binary CPM-LDPC codes. The $J=3$ examples attain the minimum-distance upper bound $(J+1)!=24$, whereas the $J=4$ examples do not attain the upper bound $(J+1)!=120$.}
\label{tab:examples}
\begin{tabular}{ccrrrcl}
\toprule
$J$ & $L$ & $P$ & $n$ & $k$ & $d$ & $E=(e_{j\ell})$\\
\midrule
3 & 4 & 24 & 96 & 26 & $24$ & $\begin{pmatrix}0 & 0 & 0 & 0\\0 & 1 & 2 & 4\\0 & 7 & 15 & 5\end{pmatrix}$\\
\addlinespace[4pt]
3 & 5 & 45 & 225 & 92 & $24$ & $\begin{pmatrix}0 & 0 & 0 & 0 & 0\\0 & 1 & 3 & 10 & 14\\0 & 40 & 31 & 33 & 30\end{pmatrix}$\\
\addlinespace[4pt]
3 & 6 & 71 & 426 & 215 & $24$ & $\begin{pmatrix}0 & 0 & 0 & 0 & 0 & 0\\0 & 47 & 24 & 69 & 39 & 55\\0 & 61 & 41 & 19 & 13 & 58\end{pmatrix}$\\
\addlinespace[4pt]
3 & 7 & 111 & 777 & 446 & $24$ & $\begin{pmatrix}0 & 0 & 0 & 0 & 0 & 0 & 0\\0 & 3 & 11 & 15 & 45 & 93 & 110\\0 & 34 & 18 & 9 & 1 & 4 & 101\end{pmatrix}$\\
\addlinespace[4pt]
3 & 8 & 159 & 1272 & 797 & $24$ & $\begin{pmatrix}0 & 0 & 0 & 0 & 0 & 0 & 0 & 0\\0 & 22 & 62 & 117 & 79 & 126 & 70 & 123\\0 & 11 & 67 & 13 & 129 & 87 & 102 & 109\end{pmatrix}$\\
\midrule
4 & 5 & 23 & 115 & 26 & $30$ & $\begin{pmatrix}0 & 0 & 0 & 0 & 0\\0 & 1 & 3 & 10 & 14\\0 & 17 & 8 & 11 & 6\\0 & 10 & 14 & 12 & 4\end{pmatrix}$\\
\addlinespace[4pt]
4 & 6 & 29 & 174 & 61 & $28$ & $\begin{pmatrix}0 & 0 & 0 & 0 & 0 & 0\\0 & 18 & 24 & 11 & 10 & 26\\0 & 3 & 12 & 19 & 13 & 28\\0 & 26 & 16 & 24 & 25 & 27\end{pmatrix}$\\
\addlinespace[4pt]
4 & 7 & 43 & 301 & 132 & $32$ & $\begin{pmatrix}0 & 0 & 0 & 0 & 0 & 0 & 0\\0 & 3 & 11 & 15 & 2 & 7 & 24\\0 & 34 & 18 & 9 & 1 & 4 & 15\\0 & 37 & 41 & 20 & 28 & 25 & 2\end{pmatrix}$\\
\addlinespace[4pt]
4 & 8 & 73 & 584 & 295 & $30\le d\le46$ & $\begin{pmatrix}0 & 0 & 0 & 0 & 0 & 0 & 0 & 0\\0 & 22 & 62 & 44 & 6 & 53 & 70 & 50\\0 & 11 & 67 & 13 & 56 & 14 & 29 & 36\\0 & 23 & 45 & 2 & 29 & 38 & 69 & 46\end{pmatrix}$\\
\bottomrule
\end{tabular}
\par\smallskip
\begin{minipage}{\textwidth}
\small
\textit{Code definition.} $E$ is the $J\times L$ exponent matrix and $P$ is the lift size. Let $S_P$ be the $P\times P$ circulant permutation matrix whose 1 in column $t$ is in row $(t+1\bmod P)$, for $0\le t<P$. The check matrix and binary code are
\[ H_P(E)=[S_P^{\,e_{j\ell}}]_{j,\ell},\qquad C_P(E)=\{c\in\F_2^{LP}:H_P(E)c=0\}. \]
Exponents are taken modulo $P$; exponent 0 denotes the identity matrix $S_P^0=I_P$. Here $J$ is the column weight, $L$ the row weight, $n=LP$ the length, $k=LP-\rank_{\F_2}H_P(E)$ the dimension, and $d$ the minimum distance. Numerical entries for $d$ are exact; the inequality gives verified lower and upper bounds.
\end{minipage}
\end{table}

For every integer $L\ge4$, Theorem~\ref{thm:main} with $J=3$ gives
\begin{equation}
 (E_L)_{j\ell}=24^{jL+\ell}\quad(0\le j<3,\ 0\le\ell<L),\qquad
 P_L=22\cdot24^{2L}(24^{L-1}-1)+1
 \label{eq:example-general}
\end{equation}
with minimum distance 24 for every $L\ge4$. The individual examples below use much smaller lift sizes, with their distances established by exact verification.

The first code in Table~\ref{tab:examples} is the $[96,26,24]$ code in \eqref{eq:example}, illustrated in Section~\ref{sec:codeword-graph}. It attains the bound at the composite lift size $P=24$, with its minimum distance verified by exhaustive enumeration.

The $J=3$ arrays for $L=5,7$ are reverified examples from Bocharova et al.~\cite[Table~III]{bocharova2011}. For $L=6,8$, searching at lift sizes below their values $P=72,160$ yielded the arrays at $P=71,159$ in the table.

For $J=4$, we started with candidates formed by adding a fourth row to known $J=3$ arrays, then decreased $P$ in stages and reselected the exponents. For every pair of distinct block rows $j,j'$, the differences $e_{j\ell}-e_{j'\ell}$ are distinct modulo $P$ across columns. This condition excludes 4-cycles~\cite{fossorier2004}. We selected candidates with no codeword of weight at most 24.

All $J=3$ examples have verified minimum distance 24. For $J=4$, the exact distances for $L=5,6,7$ are 30, 28, and 32, respectively, while the verified interval for $L=8$ is $30\le d\le46$. Each lower bound follows from exhaustive enumeration or a complete search at smaller weights, and each upper bound is witnessed by a codeword.

The reported lift sizes have not been proved minimal. Appendix~\ref{app:computation} explains search completeness and reproduction. The two examples for each $L=5,\ldots,8$ also differ in length and rate $k/n$. They illustrate attainment of the bound at column weight three and constructions exceeding that distance at column weight four.
\FloatBarrier

\section{Conclusion}
For binary CPM-LDPC codes with a full array of single circulant permutation matrices, the distance bound $(J+1)!$ is attainable for every fixed $2\le J<L$ and every sufficiently large integer lift size. The explicit construction guarantees this with one fixed exponent matrix, while the random construction shows that independent uniform choices attain the bound with probability tending to one. Both results use cycle conditions necessary for low-weight codewords, through deterministic exclusion and probability bounds, respectively.

The small-lift examples attain the bound 24 for $J=3$ and achieve distance at least 28 for $J=4$. The examples with $L=5,P=23$, $L=6,P=29$, and $L=7,P=43$ have exact distances 30, 28, and 32, respectively. A direction for further work is to reduce the exponent range and lift size while retaining this distance. For additional linear constraints on the exponents, a starting point is to determine whether those constraints make a cycle condition hold identically.

For the random construction, removing pairings that give the same condition and choices that cannot be realized may reduce the constant in \eqref{eq:constant}. Reducing the exponent range of the explicit construction and improving the probability bound are directions for bringing the general distance guarantees closer to the small-lift examples.

\appendices
\section{Term count for polynomial check equations}
\label{app:algebra}
Equation~\eqref{eq:weight-preservation} in the proof of Theorem~\ref{thm:main} constructs, from a weight-$w$ codeword, a nonzero polynomial vector $p$ satisfying $Zp=0$ with exactly $w$ terms across its components. It remains to show that such a vector requires at least $(J+1)!$ terms. Below, we write $c$ for an arbitrary polynomial vector. The theorem of Draisma, Kahle, and Wiersig bounds the term count of a single polynomial. We therefore combine the components of $c$ into one polynomial $f$ while preserving their total term count, and then use $Zc=0$ to establish the vanishing condition required by that theorem.

\setcounter{theorem}{0}
\renewcommand{\thetheorem}{A.\arabic{theorem}}
\begin{lemma}[Term count for polynomial check equations]
\label{lem:generic}
Let $2\le J<L$, and let $Z=(z_{j\ell})$ be a $J\times L$ matrix of independent indeterminates. Let $\mathcal R=\F_2[z_{j\ell}^{\pm1}]$ be the Laurent polynomial ring, which also allows negative integer powers. Define the weight of $c\in\mathcal R^L$ as the sum of the numbers of monomials in its components. The condition $Zc=0$ means that $\sum_\ell z_{j\ell}c_\ell=0$ in every row $j$: every term occurs an even number of times and cancels. Then
\begin{equation}
 \min_{0\ne c\in\ker_{\mathcal R}Z}\wt(c)=(J+1)!.
 \label{eq:generic}
\end{equation}
\end{lemma}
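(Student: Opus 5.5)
The proof has two halves: an explicit kernel vector with exactly $(J+1)!$ terms for the upper bound, and a reduction to the theorem of Draisma, Kahle, and Wiersig~\cite{dkw2023} for the lower bound.

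\emph{Upper bound.} Restrict to columns $0,\ldots,J$ and let $Z'$ be the $J\times(J+1)$ generic submatrix. Set $c_\ell=\det Z'_{\hat\ell}$ for $0\le\ell\le J$, where $Z'_{\hat\ell}$ deletes column $\ell$, and set $c_\ell=0$ for $\ell>J$. Signs are irrelevant over $\F_2$. For each row $j$, $\sum_\ell z_{j\ell}c_\ell$ is the Laplace expansion of the $(J+1)\times(J+1)$ matrix formed by $Z'$ with row $j$ repeated, so it vanishes and $Zc=0$. Each minor of a generic $J\times J$ matrix is a sum of $J!$ distinct monomials, one per permutation, so nothing cancels over $\F_2$. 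Hence $\wt(c)=(J+1)\cdot J!=(J+1)!$.

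\emph{Lower bound: combining into one polynomial.} Let $0\ne c\in\ker_{\mathcal R}Z$. Multiplying all components by a common monomial changes neither the weight nor the kernel condition, so we may assume every $c_\ell$ is an ordinary polynomial. Introduce new indeterminates $y_0,\ldots,y_{L-1}$ and put $f=\sum_\ell y_\ell c_\ell$. Since $f$ is linear in $y$ and the $y_\ell$ are distinct, terms coming from different components cannot coincide, so $\wt(f)=\wt(c)$ and $f\ne0$.

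\emph{Lower bound: the vanishing condition.} Let $M$ be the $(J+1)\times L$ matrix obtained by stacking the row $y$ under $Z$. Substituting $y=\lambda^{\mathsf T}Z$ gives $f=\lambda^{\mathsf T}Zc=0$. Hence $f$ vanishes on every point where the last row of $M$ lies in the row span of the first $J$ rows. The Zariski closure of this set is the determinantal variety of $(J+1)\times L$ matrices of rank at most $J$. Because that variety is irreducible and the generic matrix is reduced, $f$ lies in the ideal $I_{J+1}(M)$ of maximal minors of $M$. This works over an algebraic closure of $\F_2$, since the determinantal ideal is prime in every characteristic. I would then invoke \cite[Theorem~3.1 and Corollary~3.3]{dkw2023}, which I expect to state that every nonzero polynomial in the ideal of maximal minors of a generic $(J+1)\times L$ matrix, $L\ge J+1$, has at least $(J+1)!$ terms. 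This gives $\wt(c)=\wt(f)\ge(J+1)!$.

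\emph{Main obstacle.} The hardest step is matching the exact hypotheses of the Draisma--Kahle--Wiersig result: the characteristic of the field (here $2$), Laurent versus ordinary polynomials, and whether their statement concerns ideal membership or vanishing on the rank-deficient locus. If their theorem is phrased for the condition $Mv=0$ rather than for minors, I would instead apply it to the homogeneous linear system directly, treating $y$ as one more generic row. If characteristic $2$ is not covered, I would check that their term-counting argument, which passes to initial terms and reduces to the permanent-like expansion, uses no division by $2$.
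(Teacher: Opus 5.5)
This is correct and is essentially the paper's proof: the same cofactor vector gives the upper bound, the same augmentation $f=\sum_\ell y_\ell c_\ell$ turns $\wt(c)$ into the term count of a single polynomial, and the same appeal to Draisma--Kahle--Wiersig follows once $f$ is shown to vanish on all $(J+1)\times L$ matrices of rank at most $J$. The only divergence is in that vanishing step: you use irreducibility of the determinantal variety (and then primeness of $I_{J+1}$, needed only if the theorem is phrased as ideal membership), whereas the paper treats the case $\rank Z<J$ by hand, using a rank factorization of the stacked matrix and the perturbation $A+sI_J$, whose determinant has only finitely many roots.
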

\begin{IEEEproof}
First prove the lower bound. Take $0\ne c\in\ker_{\mathcal R}Z$. Only finitely many exponents occur in its components, so multiplying by a common monomial $m$ makes every exponent nonnegative. This operation adds the same vector to the exponent vector of every monomial. Distinct terms therefore remain distinct, preserving both nonzeroness and the term count in each component. It also preserves the check equations, since $Z(mc)=m(Zc)=0$. Continue to denote this vector of ordinary polynomials by $c$.

To express the total term count of the components as the term count of one polynomial, we must keep terms from different components distinct. Simply adding the components could cancel a monomial that occurs in more than one component. Introduce new indeterminates $y_0,\ldots,y_{L-1}$, independent of the $z_{j\ell}$, and multiply each component by a different indeterminate:
\begin{equation}
 f(Z,y)=\sum_{\ell=0}^{L-1}y_\ell c_\ell(Z).
 \label{eq:augmented}
\end{equation}
For example, a monomial $m$ shared by $c_0,c_1$ gives the distinct terms $y_0m,y_1m$ in $f$. Distinct terms within a component also remain distinct after multiplication by $y_\ell$. Thus the terms in the components of $c$ correspond one to one to the terms of $f$. In particular, $f\ne0$ and its term count is $\wt(c)$.

Appending the row $y=(y_0,\ldots,y_{L-1})$ to $Z$ makes $f$ a polynomial in the entries of a $(J+1)\times L$ matrix. The Draisma--Kahle--Wiersig theorem \cite[Theorem~3.1]{dkw2023} states that, over an algebraically closed field, a nonzero polynomial vanishing on every $m\times n$ matrix of rank $r$ has at least $(r+1)!$ terms. The theorem also holds in characteristic two. To apply it with $m=J+1$, $n=L$, and $r=J$, we show that $f(Z,y)=0$ whenever the matrix obtained by appending $y$ to $Z$ has rank at most $J$.

If $y$ is a linear combination $y=aZ$ of the rows of $Z$, then \eqref{eq:augmented} and $Zc=0$ give $f=yc=a(Zc)=0$. This corresponds to a codeword satisfying every linear combination of its parity checks. If the rank of $Z$ is less than $J$, however, the appended matrix can have rank at most $J$ even when $y$ is outside the row space of $Z$. A matrix factorization and a univariate polynomial allow us to include this case as well.

Work over an algebraic closure $K$ of $\F_2$, an infinite field containing $\F_2$ in which every nonconstant univariate polynomial has a root. Viewing $f,c$ as polynomials over $K$ changes neither their coefficients nor their monomials, so $f$ remains nonzero with the same term count. Moreover, $Zc(Z)=0$ is a polynomial identity: the coefficient of every monomial is zero. It therefore remains valid under every assignment of values in $K$. For example, the nonzero polynomial $x^2+x$ evaluates to zero at both elements of $\F_2$. Thus checking only assignments in $\F_2$ would not establish a polynomial identity.

In verifying the vanishing condition, use the same symbols $Z,y$ for matrices and rows obtained by assigning values in $K$. Suppose the matrix obtained by appending $y$ to $Z$ has rank at most $J$. Place a basis of its column space in a left factor and the coefficients expressing each column in that basis in a right factor. The basis has at most $J$ elements. Padding the left factor with zero columns and the right factor with zero rows if necessary gives
\begin{equation}
 \begin{pmatrix}Z\\y\end{pmatrix}
 =\begin{pmatrix}A\\a\end{pmatrix}B,
 \qquad A\in K^{J\times J},\quad a\in K^{1\times J},\quad B\in K^{J\times L}.
 \label{eq:rank-factorization}
\end{equation}
Equation~\eqref{eq:rank-factorization} gives $Z=AB$ and $y=aB$. If $A$ is invertible, then $y=aA^{-1}Z$ is a linear combination of the rows of $Z$, so the preceding argument gives $f(Z,y)=0$.

If $A$ in \eqref{eq:rank-factorization} is singular, keep $B,a$ fixed and replace $A$ by $A+sI_J$. Substitute this matrix into $f$ from \eqref{eq:augmented} and define
\begin{equation}
 h(s)=f((A+sI_J)B,aB).
 \label{eq:auxiliary-polynomial}
\end{equation}
Here $I_J$ is the $J\times J$ identity matrix and $s$ is a new indeterminate. Because $f$ is an ordinary polynomial, $h$ has no denominators and is defined for every $s$, including $s=0$. The determinant $\det(A+sI_J)$ is monic of degree $J$ and therefore has only finitely many roots. For every other value of $s$, the matrix $A+sI_J$ is invertible, and the preceding argument applied to $Z=(A+sI_J)B$ and $y=aB$ gives $h(s)=0$. Since $K$ is infinite, $h$ has infinitely many roots. A nonzero univariate polynomial has only finitely many roots, so $h$ is identically zero. Evaluating \eqref{eq:auxiliary-polynomial} at $s=0$ and using $AB=Z$, $aB=y$ from \eqref{eq:rank-factorization} gives $0=h(0)=f(AB,aB)=f(Z,y)$. This proves the vanishing condition for singular $A$ as well.

Thus the nonzero polynomial $f$ vanishes on every $(J+1)\times L$ matrix of rank at most $J$. Applying the Draisma--Kahle--Wiersig theorem~\cite[Theorem~3.1]{dkw2023} to $f$ in \eqref{eq:augmented} gives at least $(J+1)!$ terms. Since its term count equals $\wt(c)$, this proves the lower bound $\wt(c)\ge(J+1)!$.

A cofactor vector attains equality in \eqref{eq:generic}. Choose $J+1$ columns of $Z$. At each selected column, let the corresponding component be the determinant of the $J\times J$ submatrix on the other $J$ selected columns. Set all unselected components to zero. In characteristic two, $-1=1$, so cofactor signs can be omitted.

To verify $Zc=0$, fix a row $j$. Take the matrix on the selected $J+1$ columns and append another copy of its $j$th row. The resulting square matrix has two identical rows, so its determinant is zero. Expanding this determinant along the appended row gives precisely $\sum_\ell z_{j\ell}c_\ell$. Hence the check equation holds for every row $j$.

The determinant in each selected component is a sum of $J!$ terms, one for each permutation of the $J$ columns. Distinct permutations choose different column indeterminates in at least one row. Since the indeterminates are independent, the resulting monomials are all distinct and none cancel. Thus all $J+1$ selected components are nonzero and each has $J!$ terms. Their total term count is $(J+1)J!=(J+1)!$, attaining the lower bound. For example, when $J=2$, three components have two terms each, giving six terms.
\end{IEEEproof}

\section{Computational distance verification}
\label{app:computation}
The distance lower bounds in Table~\ref{tab:examples} are verified by a search over nonzero codeword supports. A common cyclic shift preserves the code, so a support position in its first occupied block column $\ell$ can be moved to $(\ell,0)$. Forbidding earlier block columns and considering all $\ell=0,\ldots,L-1$ therefore includes a representative of a minimum-weight codeword. This symmetry also holds for composite $P$.

Choose a check containing an odd number of positions in the current partial support. A completed codeword must add at least one remaining position incident to it. Branch on those candidates in a fixed order, forbidding candidates from earlier branches in each later branch. Prune only when the checks cannot be satisfied within the remaining weight budget. The arrays, source, run records, and pruning arguments are available in the public repository~\cite{kasai2026code}.

For $J=3,L=4,P=24$, two independent implementations enumerated all $2^{26}-1$ nonzero codewords, confirming minimum distance 24 and exactly 668 words of weight 24. For $J=4,L=5,P=23$, an independent implementation also enumerated all $2^{26}-1$ nonzero codewords, confirming minimum distance 30 and exactly 23 words of weight 30.

For each $J=3$ array with $L=5,\ldots,8$, searches completed through weight 22 for every root and partition, and a weight-24 codeword was found. For $J=4,L=6,7,8$, searches were completed through weights 26, 30, and 28, respectively, and codewords of weights 28, 32, and 46 were found. By the even-weight property in Section~\ref{sec:code-definition}, the distances for $L=6,7$ are exactly 28 and 32, while $30\le d\le46$ for $L=8$.

A separate checker verifies stored run-record hashes and partition coverage, exponent arrays, parity-check ranks, and the codewords providing upper bounds. Checking these records is distinct from rerunning the search. Reproducing the lower bounds requires rerunning the search implementation in this repository.

\Needspace{8\baselineskip}
\section{Numerical sufficient lift sizes}
\label{app:lift-thresholds}
Table~\ref{tab:lift-thresholds} evaluates the sufficient conditions in Theorem~\ref{thm:main} and Corollary~\ref{cor:random} for $J=2,3,4$ and $J<L\le8$. With $U=(J+1)!$, $R$ from \eqref{eq:construction}, and $K_{J,L}$ from \eqref{eq:constant}, put
\begin{equation}
 P_{\mathrm{exp}}=(U-2)R+1,\qquad
 P_{+}=K_{J,L}+1.
 \label{eq:lift-thresholds}
\end{equation}
For every integer $P\ge P_{\mathrm{exp}}$, the construction with exponents $e_{j\ell}=U^{jL+\ell}$ has minimum distance $U$. For every integer $P\ge P_{+}$, choosing all exponent entries independently and uniformly from $0,\ldots,P-1$ gives minimum distance $U$ with probability at least $1-K_{J,L}/P>0$. The value $P_{+}$ in \eqref{eq:lift-thresholds} is the smallest integer $P$ for which this probability lower bound is positive. Neither condition requires $P$ to be prime.

To obtain the success probability, use the known upper bound $\dmin\le U$: the events $\dmin=U$ and $\dmin<U$ are complementary. Equation~\eqref{eq:probability} therefore gives
\begin{equation}
 \Pr[\dmin=U]
 =1-\Pr[\dmin<U]
 \ge1-\min\{1,K_{J,L}/P\}
 =\max\{0,1-K_{J,L}/P\}.
 \label{eq:success-probability}
\end{equation}
For integers $P,K_{J,L}$, positivity of the lower bound in \eqref{eq:success-probability} is equivalent to
\begin{equation}
 1-K_{J,L}/P>0
 \quad\Longleftrightarrow\quad P-K_{J,L}>0
 \quad\Longleftrightarrow\quad P\ge K_{J,L}+1.
 \label{eq:positive-threshold}
\end{equation}
This gives $P_{+}$ in \eqref{eq:lift-thresholds}.

For example, when $J=2$, the sum in \eqref{eq:constant} contains only $w=2,4$, giving
\begin{equation*}
 K_{2,L}=2L^2+36L^4.
\end{equation*}
For $L=3$, this gives $K_{2,3}=2934$ and $P_{+}=2935$, with guaranteed success probability at least $1/2935$.

The table rounds each threshold upward to three significant digits, preserving the sufficient conditions. These values are neither minimum feasible lift sizes nor measured success rates. In particular, $K_{J,L}$ is large because its count includes repeated conditions and unrealizable choices. The $J=3$ examples in Table~\ref{tab:examples} attain the bound at lift sizes below these sufficient thresholds.

\begin{table}[!ht]
\centering
\caption{Sufficient lift sizes for attaining $U=(J+1)!$. Each guarantee holds for every integer $P$ at least as large as the displayed value in its column.}
\label{tab:lift-thresholds}
\renewcommand{\arraystretch}{1.12}
\begin{tabular}{ccc rr}
\toprule
$J$ & $L$ & $U$ & Explicit construction & Probability lower bound $>0$ \\
 & & & $P_{\mathrm{exp}}$ & $P_{+}$ \\
\midrule
2 & 3 & 6 & $3.03\times10^{4}$ & $2.94\times10^{3}$ \\
2 & 4 & 6 & $1.12\times10^{6}$ & $9.25\times10^{3}$ \\
2 & 5 & 6 & $4.03\times10^{7}$ & $2.26\times10^{4}$ \\
2 & 6 & 6 & $1.46\times10^{9}$ & $4.68\times10^{4}$ \\
2 & 7 & 6 & $5.23\times10^{10}$ & $8.66\times10^{4}$ \\
2 & 8 & 6 & $1.89\times10^{12}$ & $1.48\times10^{5}$ \\
\midrule
3 & 4 & 24 & $3.35\times10^{16}$ & $1.01\times10^{45}$ \\
3 & 5 & 24 & $4.63\times10^{20}$ & $1.37\times10^{47}$ \\
3 & 6 & 24 & $6.40\times10^{24}$ & $7.53\times10^{48}$ \\
3 & 7 & 24 & $8.85\times10^{28}$ & $2.24\times10^{50}$ \\
3 & 8 & 24 & $1.23\times10^{33}$ & $4.22\times10^{51}$ \\
\midrule
4 & 5 & 120 & $3.77\times10^{41}$ & $4.19\times10^{471}$ \\
4 & 6 & 120 & $7.82\times10^{49}$ & $9.24\times10^{480}$ \\
4 & 7 & 120 & $1.63\times10^{58}$ & $7.33\times10^{488}$ \\
4 & 8 & 120 & $3.37\times10^{66}$ & $5.11\times10^{495}$ \\
\bottomrule
\end{tabular}
\end{table}
\FloatBarrier

\section{Conditions for an extension to quantum codes}
\label{sec:quantum}
Applying the distance guarantee to the CPM-PP quantum codes that motivated this study requires distinguishing nonzero codewords from nontrivial logical operators. We describe the restriction on directly using the check matrices in Table~\ref{tab:examples} and the remaining question for a general extension.

The binary check matrices $H_X,H_Z$ of a Calderbank--Shor--Steane (CSS) code satisfy $H_XH_Z^{\mathsf T}=0$. Assume the number of encoded qubits $k_Q=n-\rank_{\F_2}H_X-\rank_{\F_2}H_Z$ is positive. The $X$-type distance is
\begin{equation}
 d_X=\min\{\wt(x):H_Zx=0,\ x\notin\operatorname{row}(H_X)\},
 \label{eq:css-distance}
\end{equation}
Here $\operatorname{row}(H_X)$ is the binary row space; its elements represent stabilizers, which act trivially on the code space. Interchanging $X,Z$ defines $d_Z$, and the quantum minimum distance is $d_Q=\min(d_X,d_Z)$~\cite{okada2026pp}.

Let $H_Z=H_P(E)$ be a check matrix from Table~\ref{tab:examples}. Every nonzero row $r$ of a commuting matrix $H_X$ satisfies $H_Zr^{\mathsf T}=0$, so \eqref{eq:code} gives $\wt(r)\ge\dmin(C_P(E))$. For $J=3,L=4,P=24$, for example, every nonzero check on the other side has weight at least 24, excluding weight-4 rows. Every tabulated example has a distance lower bound greater than $L$, so none can form a CSS code with a check matrix having nonzero rows of the same weight $L$.

In a general CSS code, however, rows of $H_X$ belong to $\ker H_Z$ but are excluded from the minimization in \eqref{eq:css-distance}. The check weight $L$ therefore does not bound the quantum distance. Proving $d_X\ge U$ requires showing that every $x$ of weight below $U$ satisfying $H_Zx=0$ belongs to $\operatorname{row}(H_X)$. The present strategy of excluding every nonzero kernel vector cannot be used directly when stabilizers of weight below $U$ are present.

In a PP construction, linear congruences couple the $X$- and $Z$-exponents to ensure commutation~\cite{okada2026pp}. Applying \eqref{eq:construction} independently to both sides, or sampling all exponents independently and uniformly, therefore need not satisfy commutation. These constraints can also make low-weight cycle conditions hold identically, so Lemma~\ref{lem:generic}, which assumes independent indeterminates, does not apply directly. An extension requires showing that every word below the target distance that exists identically under the constraints is a sum of stabilizer rows, and then avoiding the remaining cycle conditions. Equality for the distance also requires proving that a word providing the upper bound is not a stabilizer. Obtaining this classification and construction for general admissible $J,L$ remains open; we have not established a quantum distance-attainment theorem with the generality of Theorem~\ref{thm:main}.

\bibliographystyle{IEEEtran}
\bibliography{references}
\end{document}